\newtheorem{theorem}{Theorem}
\newtheorem{lemma}[theorem]{Lemma}
\newcounter{def_counter}
\theoremstyle{definition}
\newtheorem{defn}[def_counter]{Definition}
\newcommand{\N}{\mathbb N}
\newcommand{\norm}[1]{\|#1\|}
\title{Finding the growth rate of a regular language in polynomial time}
\author{Dalia Krieger, Narad Rampersad, and Jeffrey Shallit\\
School of Computer Science\\
University of Waterloo\\
Waterloo, Ontario  N2L 3G1\\
Canada\\
{\tt d2kriege@cs.uwaterloo.ca} \\
{\tt nrampersad@math.uwaterloo.ca} \\
{\tt shallit@graceland.uwaterloo.ca}}
\begin{document}
\date{\today}
\maketitle

\begin{abstract}
We give an $O(n^3+n^2 t)$ time algorithm to determine whether
an NFA with $n$ states and $t$ transitions accepts a language
of polynomial or exponential growth.  We also show that
given a DFA accepting a language of polynomial growth,
we can determine the order of polynomial growth in
quadratic time.
\end{abstract}

\section{Introduction}

Let $L \subseteq \Sigma^*$ be a language. If there exists a polynomial $p(x)$ such that $|L \cap
\Sigma^m| \leq p(m)$ for all $m \geq 0$, then $L$ has \emph{polynomial growth}. Languages of
polynomial growth are also called \emph{sparse} or \emph{poly-slender}.

If there exists a real number $r > 1$ such that $|L \cap \Sigma^m| \geq r^m$ for infinitely many $m
\geq 0$, then $L$ has \emph{exponential growth}.  Languages of exponential growth are also called
\emph{dense}.

If there exist words
$w_1,w_2,\ldots,w_k \in \Sigma^*$ such that $L \subseteq
w_1^*w_2^*\cdots w_k^*$, then $L$ is called a \emph{bounded language}.

Ginsburg and Spanier \cite{GS64} (see Ginsburg \cite[Chapter~5]{Gin66})
proved many deep results concerning the structure of bounded context-free
languages.  One significant result \cite[Theorem~5.5.2]{Gin66} is that
determining if a context-free grammar generates a bounded language is
decidable.  However, although it is a relatively straightforward consequence
of their work, they did not make the following connection between the
bounded context-free languages and those of polynomial growth.

\begin{theorem}
\label{bounded}
A context-free language is bounded if and only if it has polynomial growth.
\end{theorem}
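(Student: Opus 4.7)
The forward implication does not even require the CFL hypothesis. Assume $L \subseteq w_1^* w_2^* \cdots w_k^*$. Every word of $L$ of length $m$ is of the form $w_1^{n_1} \cdots w_k^{n_k}$ for some $(n_1,\ldots,n_k) \in \N^k$ with $\sum_i n_i |w_i| = m$; once $n_1,\ldots,n_{k-1}$ are fixed, the linear constraint determines $n_k$ (when such a value exists), so the number of admissible tuples, and hence the number of such words, is at most $O(m^{k-1})$. Thus $|L \cap \Sigma^m|$ is polynomially bounded.

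For the converse I plan to argue the contrapositive: \emph{a non-bounded CFL has exponential growth}. The main tool is the structural analysis of bounded CFLs carried out by Ginsburg and Spanier \cite{GS64,Gin66}. Their results imply that if a CFL $L$ is not bounded, then $L$ must contain a subset of the form $u \cdot \{\alpha,\beta\}^* \cdot z$, where $\alpha,\beta \in \Sigma^+$ do \emph{not} commute (equivalently, $\alpha$ and $\beta$ are not both powers of a common word). Concretely, one iterates the Ogden-style pumping lemma along a derivation tree of a CFG for $L$ so as to isolate two independent pumping positions whose pumped words do not jointly fit inside a single cyclic word $\gamma^*$; this is precisely where we use context-freeness, and I expect the main obstacle to be the careful verification that such a ``truly branching'' pair can be extracted whenever boundedness fails, rather than the counting steps that bracket it.

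Once such $\alpha,\beta$ are in hand, the defect theorem tells us that $\{\alpha,\beta\}$ freely generates a submonoid of $\Sigma^*$, so distinct finite words over the two-letter alphabet $\{\alpha,\beta\}$ yield distinct concatenations in $\Sigma^*$. Considering all sequences containing exactly $a$ copies of $\alpha$ and $a$ copies of $\beta$ produces $\binom{2a}{a} \geq 4^a/(2a+1)$ distinct elements of $L$, all of the same length $m := |u|+|z|+a(|\alpha|+|\beta|)$. Therefore $|L \cap \Sigma^m| \geq r^m$ for some $r > 1$ and infinitely many $m$, contradicting polynomial growth and completing the proof.
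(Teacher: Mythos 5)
Your forward direction is correct and is the standard counting argument: after discarding any empty $w_i$, a word of length $m$ in $L\subseteq w_1^*\cdots w_k^*$ is $w_1^{n_1}\cdots w_k^{n_k}$ for some tuple with $\sum_i n_i|w_i|=m$, and $(n_1,\ldots,n_{k-1})$ determines $n_k$, giving at most $(m+1)^{k-1}$ words. Note that the paper does not actually prove Theorem~\ref{bounded}; it quotes it from the literature, and the only in-paper arguments to compare with are the regular-language analogues, Lemmas~\ref{comm1} and~\ref{comm2}.

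The converse, however, rests on a structural claim that is false as stated: a non-bounded CFL need \emph{not} contain a subset $u\{\alpha,\beta\}^*z$ with $\alpha\beta\neq\beta\alpha$. Take $L=\{wcw^R : w\in\{a,b\}^*\}$ over $\{a,b,c\}$: it is a context-free language of exponential growth (hence non-bounded, by your forward direction), yet it contains no infinite set of the form $u\alpha^*z$ at all, let alone $u\{\alpha,\beta\}^*z$. Indeed, every word of $L$ has exactly one occurrence of $c$, so $c$ cannot occur in $\alpha$; the unique $c$ then lies in $u$ or in $z$, which pins one half of each word to a fixed string while the other half grows with the number of repetitions --- a contradiction. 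The reason your plan fails is that context-free pumping is inherently two-sided: iterating a derivation $A\Rightarrow^* xAy$ inserts matched pairs around a center. What the Ginsburg--Spanier analysis (Lemmas~5.5.5--5.5.6 of \cite{Gin66}) actually yields when $L$ is not bounded is a variable $A$ with derivations $S\Rightarrow^* uAv$, $A\Rightarrow^* w$, and $A\Rightarrow^* x_1Ay_1$, $A\Rightarrow^* x_2Ay_2$ such that $x_1x_2\neq x_2x_1$ or $y_1y_2\neq y_2y_1$. Your counting step does survive this repair: among the words $u\gamma_1\cdots\gamma_{2a}\,w\,\delta_{2a}\cdots\delta_1\,v$ with $(\gamma_i,\delta_i)\in\{(x_1,y_1),(x_2,y_2)\}$ and exactly $a$ indices of each kind, all have the same length, and freeness of (say) $\{x_1,x_2\}$ makes the left blocks, hence the words, pairwise distinct, giving $\binom{2a}{a}$ words of one length. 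As written, though, your one-sided extraction is really the paper's proof of Lemma~\ref{comm1} for regular languages (where pumping is one-sided), and it does not transfer to CFLs without the two-sided modification.
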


Curiously, this result has been independently discovered at least six times:
namely, by Trofimov \cite{Tro81}, Latteux and Thierrin \cite{LT84},
Ibarra and Ravikumar \cite{IR86}, Raz \cite{Raz97},
Incitti \cite{Inc01}, and Bridson and Gilman \cite{BG02}.
A consequence of all of these proofs is that a context-free language has
either polynomial or exponential growth; no intermediate growth is possible.

The particular case of the bounded regular languages was also studied
by Ginsburg and Spanier \cite{GS66}, and subsequently by Szilard, Yu,
Zhang, and Shallit \cite{SYZS92} (see also \cite{IRS00}).
It follows from the more general decidability result of Ginsburg and Spanier
that there is an algorithm to determine whether a regular language has
polynomial or exponential growth (see also \cite[Theorem~5]{SYZS92}).
Ibarra and Ravikumar \cite{IR86} observed that the algorithm
of Ginsburg and Spanier runs in polynomial time for NFA's, but
they gave no detailed analysis of the runtime.
Here we specialize the algorithm of Ginsburg and Spanier
to the case of regular languages, and we give particular attention to the
runtime of this algorithm.  We also show how, given a DFA accepting
a language of polynomial growth as input, one may determine the precise
order of polynomial growth in polynomial time.

\section{Polynomial vs. exponential growth}

In this section we give an $O(n^3+n^2 t)$ time algorithm to determine whether
an NFA with $n$ states and $t$ transitions accepts a language of polynomial
or exponential growth.

\begin{theorem}
\label{mod_gins} Given a NFA $M$, it is possible to test whether $L(M)$
is of polynomial or exponential growth in $O(n^3+n^2 t)$ time, where
$n$ and $t$ are the number of states and transitions of $M$ respectively.
\end{theorem}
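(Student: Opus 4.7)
The plan has three components: preprocessing, a structural criterion, and an algorithmic test. First I would trim $M$ to its useful part $M'$, keeping only the states that are both reachable from the start state and co-reachable to an accept state; two graph traversals (one forward in $M$, one backward in the reverse graph) accomplish this in $O(n+t)$ time, and $L(M)=L(M')$, so $M$ and $M'$ share growth type.

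The key structural claim I aim to prove is: $L(M')$ has polynomial growth if and only if no useful state $q$ admits two distinct equal-length words $u\ne v$ with $q\in\delta(q,u)\cap\delta(q,v)$. The forward direction is straightforward: given such a pair $u,v$, the $2^k$ words in $\{u,v\}^k$ are pairwise distinct and all label loops at $q$; prepending a fixed start-to-$q$ path and appending a fixed $q$-to-accept path (both exist since $q$ is useful) yields $2^k$ distinct accepted words of a single common length, forcing exponential growth. For the converse I would apply a Fine-Wilf argument to the pair $uv$ versus $vu$ to conclude that any two loop-words at the same state must commute, and therefore lie in $\{w_q\}^*$ for some primitive word $w_q$. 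Gluing these local $\{w_q\}^*$ structures along the DAG of strongly connected components of $M'$ writes $L(M')$ as a finite union of sets of the form $x_0 w_1^* x_1\cdots w_k^* x_k$, a bounded language, so Theorem~\ref{bounded} delivers polynomial growth.

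The algorithmic test would use a product construction on state set $Q\times Q$ whose transitions come from pairs of $M'$-transitions (allowing, in general, differing input symbols in the two components), so that a cycle through a useful ``diagonal'' vertex $(q,q)$ which passes through an ``off-diagonal'' vertex $(p_1,p_2)$ with $p_1\ne p_2$ witnesses exactly a forbidden pair $u\ne v$ of equal length. Existence of such a cycle becomes an SCC/reachability question on the product. A careful count bounds the relevant transitions by $O(n^2 t)$, giving an $O(n^2 t)$ cost for the detection; together with an $O(n^3)$ precomputation of the same-SCC relation in $M'$ (needed to verify that the witnessing off-diagonal pair lies in the SCC containing $q$), the total is $O(n^3+n^2 t)$ as claimed.

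The main obstacle I foresee is the $\Leftarrow$ direction of the structural criterion: converting a local, per-state equal-length uniqueness condition into a global bounded-language decomposition of $L(M')$. I expect to proceed by induction on the SCC DAG, using Fine-Wilf locally to pin each state's loop-words into a single $\{w_q\}^*$, and then piecing adjacent SCCs together along the DAG to assemble the final finite union of $w_1^*\cdots w_k^*$ products that invokes Theorem~\ref{bounded}.
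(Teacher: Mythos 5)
Your structural criterion is sound and is equivalent to the paper's: the paper characterizes polynomial growth by every loop language $L_q=\{w: q\in\delta(q,w)\}$ being commutative (contained in some $u^*$), and your condition that no state admits two distinct equal-length loop words is the same, since non-commuting $x,y\in L_q$ yield the distinct equal-length pair $xy\neq yx$, while a commutative $L_q$ contains at most one word of each length. Your forward direction matches Lemma~\ref{comm1}, and your converse is Lemma~\ref{comm2}, which the paper proves by induction on the number of states (deleting $q_0$) rather than along the SCC DAG; either route is viable.

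The genuine gap is in the algorithmic test. A closed walk at $(q,q)$ in the pair-product that visits an off-diagonal vertex neither implies nor is implied by the existence of a forbidden pair. For a false positive, take states $q,p_1,p_2$ with $\delta(q,a)=\{p_1,p_2\}$ and $\delta(p_1,a)=\delta(p_2,a)=\{q\}$: the product has the cycle $(q,q)\to(p_1,p_2)\to(q,q)$, yet $L_q=(aa)^*$ is commutative. Worse, for a false negative take a single accepting start state $q$ with self-loops on both $a$ and $b$: then $L(M)=\{a,b\}^*$ has exponential growth, but the product contains no off-diagonal vertex at all, so your test reports polynomial growth. The witness to $u\neq v$ is carried by the edge labels, not by leaving the diagonal, because an NFA can send one word to two different states and two different words to one state. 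The correct detection condition is a closed walk at $(q,q)$ using at least one transition pair whose two letters differ (equivalently, augment each product state with a bit recording whether the two words read so far are still equal). Relatedly, the full pair-product has $t^2$ transitions in general, not $O(n^2t)$, so the complexity accounting would also need to be redone after the fix. The paper sidesteps all of this with a simpler per-state test: pick any single $w\in L_q$ with $|w|<n$, compute its primitive root $z$ by Knuth--Morris--Pratt, and test emptiness of $L_q\setminus z^*$ via a product with the $O(n)$-state DFA for the complement of $z^*$; this costs $O(n^2+nt)$ per state and yields the claimed $O(n^3+n^2t)$ total.
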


Let $M = (Q,\Sigma,\delta,q_0,F)$ be an NFA.  We assume that every state of
$M$ is both accessible and co-accessible, i.e., every state of $M$ can
be reached from $q_0$ and can reach a final state.  For each state $q \in Q$,
we define a new NFA $M_q = (Q,\Sigma,\delta,q,\{q\})$ and write
$L_q = L(M_q)$.

Following Ginsburg and Spanier, we say that a language $L \subseteq \Sigma^*$
is \emph{commutative} if there exists $u \in \Sigma^*$ such that
$L \subseteq u^*$.

The following two lemmas have been obtained in more generality in all of
the previously mentioned proofs of Theorem~\ref{bounded} (compare also
Lemmas~5.5.5 and 5.5.6 of Ginsburg \cite{Gin66}, or in the case
of regular languages specified by DFA's, Lemmas~2 and 3 of
Szilard et al.\ \cite{SYZS92}).

\begin{lemma}
\label{comm1}
If $L(M)$ has polynomial growth, then for every $q \in Q$,
$L_q$ is commutative.
\end{lemma}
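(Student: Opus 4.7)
The plan is to prove the contrapositive. Suppose that for some state $q \in Q$, the loop language $L_q$ is not commutative, and deduce that $L(M)$ has exponential growth, contradicting the polynomial-growth hypothesis. Note that $L_q$ is always a submonoid of $\Sigma^*$: concatenating two words that label closed walks at $q$ yields another closed walk at $q$.

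The first step is to convert the global hypothesis ``$L_q$ is not commutative'' (i.e., $L_q \not\subseteq u^*$ for any $u$) into the existence of two elements of $L_q$ that do not commute as words. For this, suppose every pair in $L_q$ commuted. Let $w_0$ be a shortest nonempty element of $L_q$ and let $z$ be its primitive root. Using the classical fact that $xy = yx$ in $\Sigma^*$ iff $x,y$ share a common root, every $v \in L_q$ would share a common root with $w_0$; uniqueness of primitive roots forces this root to lie in $z^*$, and hence $v \in z^*$. So $L_q \subseteq z^*$, a contradiction. Thus there exist $u, v \in L_q$ with $uv \neq vu$, and both are necessarily nonempty.

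The second step is to turn this into exponential growth of $L_q$ and then of $L(M)$. By the Lyndon--Schützenberger theorem, $uv \neq vu$ implies that $\{u,v\}$ generates a free submonoid of $\Sigma^*$, so the $2^n$ products $w_1 w_2 \cdots w_n$ with $w_i \in \{u, v\}$ are pairwise distinct elements of $L_q$. Writing $a = |u|$, $b = |v|$ and grouping these products by the number $i$ of occurrences of $u$ gives $\binom{n}{i}$ distinct words at each common length $ia + (n-i)b$; choosing $i = \lfloor n/2 \rfloor$ produces a single length $m_n \leq n(a+b)$ carrying $\binom{n}{\lfloor n/2 \rfloor} = \Theta(2^n/\sqrt{n})$ words of $L_q$, so $L_q$ has exponential growth. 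Since $q$ is accessible and co-accessible, there exist $x, y \in \Sigma^*$ labelling paths $q_0 \to q$ and $q \to f \in F$, whence $xL_q y \subseteq L(M)$ and the injection $w \mapsto xwy$ shifts length by the constant $|x| + |y|$, transferring exponential growth from $L_q$ to $L(M)$.

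The main obstacle is the first step: the hypothesis ``$L_q$ not commutative'' is a non-containment statement ranging over all single words $u$, but what the counting argument needs is a concrete non-commuting pair inside $L_q$. The primitive-root trick applied to a shortest nonempty loop bridges this gap cleanly. The remainder is a textbook free-monoid counting estimate plus the trivial accessibility/co-accessibility transfer from $L_q$ back to $L(M)$.
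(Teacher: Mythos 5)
Your proof is correct and follows essentially the same route as the paper: argue the contrapositive, extract a non-commuting pair $u,v \in L_q$ via the Lyndon--Sch\"utzenberger/primitive-root characterization, count words in the free submonoid they generate, and transfer to $L(M)$ by accessibility and co-accessibility. The only difference is cosmetic: the paper replaces $\{u,v\}$ by the equal-length pair $\{uv, vu\}$, so that all $2^m$ products already share a single length and the central-binomial-coefficient step in your argument becomes unnecessary.
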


\begin{proof}
A classical result of Lyndon and Sch\"utzenberger \cite{LS62} implies that if a set of words $X$
does not satisfy $X \subseteq u^*$ for any word $u$, then there exist $x,y \in X$ such that $xy
\neq yx$. Suppose then that $L(M)$ has polynomial growth, but for some $L_q$ there exists $x,y \in
L_q$, $xy \neq yx$. Let $v$ be any word such that $q \in \delta(q_0,v)$, and let $v'$ be any word
such that $\delta(q,v') \cap F \neq \emptyset$. Then for every $m \geq 0$, the set $v (xy+yx)^m v'$
consists of $2^m$ distinct words of length $|vv'| + m|xy|$ in $L(M)$. It follows that $L(M)$ has
exponential growth, contrary to our assumption.
\end{proof}

\begin{lemma}
\label{comm2}
If for every $q \in Q$, $L_q$ is commutative,
then $L(M)$ has polynomial growth.
\end{lemma}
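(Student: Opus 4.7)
The plan is to show that every word $w \in L(M)$ of length $m$ has a structured form coming from a simple path in $M$ with loops inserted at its states, and then upper bound the number of such forms by a polynomial in $m$.

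First, I would use the hypothesis to fix, for each state $q \in Q$, a word $u_q \in \Sigma^*$ with $L_q \subseteq u_q^*$. Next I would establish a decomposition lemma: any accepting computation of $M$ on an input $w$ can be transformed into the form ``traverse a simple path $S = (p_0 = q_0, p_1, \ldots, p_\ell)$ with $p_\ell \in F$ and all $p_i$ distinct, inserting at state $p_i$ a cycle-word before taking the next edge.'' The standard way to get this is a greedy shortcutting/stack argument: walk the original accepting path, maintain a stack of states currently ``on'' the residual simple path, and whenever the walk revisits a state already on the stack, peel off the intermediate segment as a loop at that state. The remaining trajectory through distinct states is the simple skeleton $S$, and everything peeled off at $p_i$ is a word in $L_{p_i}^*\subseteq u_{p_i}^*$ (since $u_{p_i}^*$ is closed under concatenation).

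Given this, every $w \in L(M)$ can be written as
\[
w \;=\; w_0\, a_1\, w_1\, a_2\, w_2 \cdots a_\ell\, w_\ell,
\]
where $a_1 a_2 \cdots a_\ell$ is the label of some simple path skeleton $S$ from $q_0$ to a final state, and each $w_i \in u_{p_i}^*$. For a fixed skeleton $S$, the word $w_i$ is completely determined by its length $|w_i|$, because $w_i = u_{p_i}^{k_i}$ with $k_i = |w_i|/|u_{p_i}|$ (and if $u_{p_i} = \varepsilon$ then $w_i = \varepsilon$, contributing length $0$). Hence the number of words of length $m$ arising from the fixed skeleton $S$ is at most the number of nonnegative integer tuples $(k_0, \ldots, k_\ell)$ with $\sum_{i=0}^{\ell} k_i |u_{p_i}| = m - \ell$, which is $O(m^\ell) = O(m^{n-1})$.

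Finally, I would observe that the number of simple path skeletons in $M$ is at most a function of $n$ alone (bounded e.g.\ by $n \cdot n!$), which is a constant once $M$ is fixed. Summing over skeletons yields $|L(M) \cap \Sigma^m| = O(m^{n-1})$, so $L(M)$ has polynomial growth. The main conceptual obstacle is the loop-peeling decomposition: one must argue carefully that each ``peeled'' segment really is a single cycle at one state (so that its label lies in $L_{p_i}$), and that what remains after all peelings is genuinely a simple path from $q_0$ to a final state; everything after that is a routine counting argument.
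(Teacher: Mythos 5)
Your proposal is correct, but it takes a genuinely different route from the paper. The paper proves the lemma by induction on the number of states: it deletes $q_0$, applies the induction hypothesis to the sub-automata $N_q$ on $Q\setminus\{q_0\}$ to get bounded languages $A_q$, writes $L(M)$ as a finite union of languages $L_{q_0}\,a\,A_q$ (plus possibly $L_{q_0}$), concludes that $L(M)$ is bounded, and then invokes the known fact that bounded languages have polynomial growth. You instead give a direct, self-contained counting argument: decompose each accepting computation into a simple-path skeleton $p_0,\dots,p_\ell$ with a closed walk inserted at each $p_i$, observe that the label of that closed walk lies in $L_{p_i}\subseteq u_{p_i}^*$ and is therefore determined by its length, and count length-tuples to get an explicit $O(m^{n-1})$ bound. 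This buys you an explicit exponent and avoids citing the boundedness-implies-polynomial-growth fact (indeed, your decomposition simultaneously re-proves that $L(M)$ is a finite union of languages $u_{p_0}^*\,b_1\,u_{p_1}^*\cdots b_\ell\,u_{p_\ell}^*$, which is essentially the structure the paper only extracts later, in the proof of Theorem~4, and there only for DFA's). The one step you flag yourself --- the loop-peeling --- does need care: the cleanest formulation is to take $p_{i+1}$ to be the state entered after the \emph{last} occurrence of $p_i$ on the (residual) walk, so that the entire segment between the first and last occurrences of $p_i$ is a single closed walk at $p_i$, whose label lies in $L_{p_i}$; this makes the skeleton states automatically distinct and sidesteps the issue of nested or repeated peelings at the same state (which your stack description glosses over, but which is absorbed correctly since $L_{p_i}$ is closed under concatenation). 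With that formulation pinned down, the argument is complete.
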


\begin{proof}
We prove by induction on the number $n$ of states of $M$ that the
hypothesis of the lemma implies that $L(M)$ is bounded.  It is well-known
that any bounded language has polynomial growth (see, for example,
\cite[Proposition~1]{Raz97}).  Clearly the result holds for $n = 1$.
We suppose then that $n > 1$.

Let $Q' = Q \setminus \{q_0\}$, $F' = F \setminus \{q_0\}$,
and $\delta'(q,a) = \delta(q,a) \setminus \{q_0\}$ for all $q \in Q'$ and
$a \in \Sigma$.  For each $q \in Q'$, we define an NFA
$N_q = (Q',\Sigma,\delta',q,F')$, and we write $A_q = L(N_q)$.
Applying the induction hypothesis to $N_q$, we conclude that $A_q$
is bounded.

The key observation is that $L(M) = L_1 \cup L_2$, where
\[
L_1 = \bigcup_{a \in \Sigma} \left( \bigcup_{q \in \delta(q_0,a)}
L_{q_0} a A_q \right),
\]
and
\[
L_2 = \begin{cases} L_{q_0}, & \text{if } q_0 \in F; \\
                    \emptyset, & \text{if } q_0 \notin F.
      \end{cases}
\]
By assumption, $L_{q_0} \subseteq u^*$ for some $u \in \Sigma^*$,
and, as previously noted, by the induction hypothesis
each of the languages $A_q$ is bounded.
It follows that $L(M)$ is a finite union of bounded languages,
and hence is itself bounded.  We conclude that $L(M)$ has polynomial growth,
as required.
\end{proof}

We now are ready to prove Theorem~\ref{mod_gins}.

\begin{proof}
Let $n$ denote the number of states of $M$.  The idea is as follows.
For every $q \in Q$, if $L_q$ is commutative, then there exists
$u \in \Sigma^*$ such that $L_q \subseteq u^*$.  For any $w \in L_q$,
we thus have $w \in u^*$.  If $z$ is the primitive root of $w$, then
$z$ is also the primitive root of $u$.  If $L_q \subseteq z^*$, then
$L_q$ is commutative.  On the other hand, if $L_q \not\subseteq z^*$,
then $L_q$ contains two words with different primitive roots,
and is thus not commutative.  This argument leads to the following
algorithm.

\begin{quotation}
\noindent For each $q \in Q$ we perform the following steps.
  \begin{itemize}
  \item Construct the NFA $M_q$ accepting $L_q$.  This takes $O(n+t)$ time.
  \item Find a word $w \in L(M_q)$, where $|w| < n$.  If $L(M_q)$ is non-empty,
  such a $w$ exists and can be found in $O(n+t)$ time.
  \item Find the primitive root of $w$, i.e., the shortest word $z$ such
  that $w = z^k$ for some $k\geq 1$.  This can be done in $O(n)$ time
  using the Knuth--Morris--Pratt algorithm.  To find the primitive root of
  $w = w_1 \cdots w_\ell$, use Knuth--Morris--Pratt to find the first
  occurrence of $w$ in $w_2 \cdots w_\ell w_1 \cdots w_{\ell-1}$.  If the first
  occurrence begins at position $i$, then $z = w_1 \cdots w_{i-1}$ is the
  primitive root of $w$.
  \item Apply the cross product construction to obtain an
  NFA $M'$ that accepts $L_q \setminus z^*$.  The NFA $M'$
  has $O(n^2)$ states and $O(nt)$ transitions.
  \item Test whether $L(M')$ is empty or not.  If $L(M')$ is non-empty,
  then by Lemma~\ref{comm1} the growth of $L(M)$ is exponential.
  If $L(M')$ is empty, then $L_q$ is commutative.  This step takes
  $O(n^2 + nt)$ time.
  \end{itemize}
If for all $q \in Q$ we have verified that $L_q$ is commutative, then by
Lemma~\ref{comm2} $L(M)$ has polynomial growth.
\end{quotation}
The runtime of this algorithm is $O(n^3+n^2t)$.
\end{proof}

\section{Finding the exact order of polynomial growth}

In this section we show that given a DFA accepting a language of
polynomial growth, it is possible to efficiently
determine the exact order of polynomial growth.  We give two
different algorithms: one combinatorial, the other algebraic.

Szilard et al.\ \cite[Theorem~4]{SYZS92} proved a weaker result: namely, that given a regular
language $L$ and an integer $d \geq 0$ it is decidable whether $L$ has $O(m^d)$ growth.  However,
even if $L$ is specified by a DFA, their algorithm takes exponential time.

\subsection{A combinatorial algorithm}

\begin{theorem}
\label{order} Given a DFA $M$ with $n$ states such that $L(M)$ is of polynomial growth, it is
possible to determine the exact order of polynomial growth in $O(n^2)$ time.
\end{theorem}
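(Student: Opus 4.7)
The plan is to characterize the exact order of polynomial growth of $L(M)$ in terms of the SCC structure of $M$'s transition graph, and then to compute the resulting quantity by standard DAG algorithms. Call an SCC \emph{non-trivial} if it contains at least one cycle, and let $k$ denote the maximum, over all paths in the condensation DAG from the SCC of $q_0$ to any SCC meeting $F$, of the number of non-trivial SCCs visited. I claim that $L(M)$ has polynomial growth of order exactly $k-1$ (with the convention that $k=0$ corresponds to a finite language), and that $k$ is computable in $O(n^2)$ time.

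For correctness, the key observation is that, because $M$ is deterministic, $|L(M)\cap\Sigma^m|$ equals the number of length-$m$ paths in $M$ from $q_0$ to $F$. Any such path enters each SCC at most once, so it decomposes uniquely into (i) a path through the condensation DAG, (ii) an entry state, exit state, and bridge edge at each visited SCC, and (iii) for each non-trivial SCC $C$ on the path, the length of the subwalk inside $C$. By Lemma~\ref{comm1}, all closed walks based at any state $q$ of $M$ have length divisible by a common integer $\ell_C$ depending only on $C$; the same reasoning yields a uniqueness lemma — within any SCC there is at most one walk of any given length between a fixed pair of states, since two such walks, composed with a common return walk, would produce two distinct equal-length elements of some $L_q$, contradicting $L_q\subseteq u_q^*$. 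Consequently, for fixed choices in (i) and (ii) visiting $j$ non-trivial SCCs $C_{i_1},\ldots,C_{i_j}$, the number of length-$m$ paths of this type equals the number of non-negative integer tuples $(a_1,\ldots,a_j)$ solving $\sum_r a_r\ell_{i_r} = m - \mathrm{const}$, which by a standard denumerant estimate is $O(m^{j-1})$ in general and $\Theta(m^{j-1})$ on an appropriate arithmetic progression of $m$. Summing over the $O(1)$ skeletons yields $|L(M)\cap\Sigma^m|=O(m^{k-1})$, and singling out a skeleton with $j=k$ delivers the matching $\Omega(m^{k-1})$ lower bound along a residue class.

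Turning to the algorithm: compute the SCCs with Tarjan's algorithm, mark each as trivial or non-trivial by inspecting its induced subgraph, build the condensation DAG, and run a longest-path computation with node weights $1$ on non-trivial SCCs and $0$ on trivial ones, restricted to SCCs that can reach one meeting $F$. Each step runs in $O(n+t)$ time, which is $O(n^2)$ since a DFA has $t=O(n|\Sigma|)$ transitions. The main obstacle is the uniqueness-of-walks step described above: without it, one loop-count tuple could label many paths and the clean $O(m^{j-1})$ count would fail. Lemma~\ref{comm1} handles exactly this point, and once it is in place the rest is a routine count of lattice points combined with a textbook DAG computation.
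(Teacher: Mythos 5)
Your proposal is correct and follows essentially the same route as the paper: under the polynomial-growth hypothesis (via Lemma~\ref{comm1}) each non-trivial SCC of the DFA is a single simple cycle, so your SCC condensation coincides with the paper's contraction of the (pairwise disjoint) cycles, and both arguments reduce the problem to a longest-path computation on the resulting DAG weighted by the number of contracted cyclic vertices. The only real difference is cosmetic: you count length-$m$ accepting paths directly via a denumerant estimate, whereas the paper invokes the known $\Theta(m^{k-1})$ growth of languages of the form $x_1 y_1^* x_2 y_2^* \cdots x_k y_k^* x_{k+1}$; this makes your lower bound slightly more explicit but does not change the method.
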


\begin{proof}
Let $M = (Q,\Sigma,\delta,q_0,F)$.  Again we assume that every state
of $M$ is both accessible and co-accessible.  Since $L(M)$ is of polynomial growth,
by Lemma~\ref{comm1} $M$ has the property that for every $q \in Q$
there exists $u \in \Sigma^*$ such that $L_{q} \subseteq u^*$.

Since $M$ is deterministic, for any state $q$ of $M$,
if there exists a non-empty word $w$ that takes $M$ from state $q$ back
to $q$, the smallest such word $w$ is unique.  There is also a unique
cycle of states of $M$ associated with such a word $w$, and all such cycles
in $M$ are disjoint.

We now \emph{contract} (in the standard graph-theoretical sense)
each such cycle to a single vertex and mark this vertex as \emph{special}.
If any vertex on a contracted cycle was final, we also mark the new special
vertex as final.  Since all the cycles in $M$ are disjoint, after contracting
all of them, the transition graph of the
automaton $M$ now becomes a \emph{directed acyclic graph} (DAG) $D$.
A path in $D$ from the start vertex to a final vertex that visits
special vertices $Q_1,Q_2,\ldots,Q_k$ corresponds to a family
of words in $L(M)$ of the form
\begin{equation}
\label{decomp}
x_1 y_1^* x_2 y_2^* \cdots x_k y_k^* x_{k+1},
\end{equation}
where the $y_i$'s are words labeling the cycles in $M$ corresponding
to the $Q_i$'s in $D$.  Note that if a cycle in $M$ is of size $t$, there
could be up to $t$ possible choices for the corresponding $y_i$.

There are only finitely many paths in $D$, and only finitely many
choices for the $x_i$'s and $y_i$'s in a decomposition of the form
given by (\ref{decomp}).  It follows that $L(M)$ is a finite union
of languages of the form $x_1 y_1^* x_2 y_2^* \cdots x_k y_k^* x_{k+1}$.
We have thus recovered the characterization of Szilard et
al.\ \cite{SYZS92}.  It is well-known that any language of this form
has $O(m^{k-1})$ growth (see, for example, \cite[Lemma~4]{SYZS92}).

Consider a path through $D$ from the start vertex to a final vertex that visits the maximum number
$d$ of special vertices. Then we may conclude that the order of growth of $L(M)$ is
$\Theta(m^{d-1})$.  This observation leads to our desired algorithm.

We first identify all the cycles in $M$ and contract them to obtain
a DAG $D$, as previously described.  It remains to find
a path through $D$ from the start vertex to a final vertex that visits
the largest number of special vertices.  The LONGEST PATH problem
for general graphs is NP-hard; however, in the case of a DAG,
it can be solved in linear time by a simple dynamic programming algorithm.
To obtain our result, we modify this dynamic programming algorithm by
adjusting our distance metric so that the length of a path is not
the number of edges on it, but rather the number of special vertices on the
path.  The most computationally intensive part of this algorithm is
finding and contracting the cycles in $M$, which can be done in $O(n^2)$ time.
\end{proof}

\subsection{An algebraic approach}

We now consider an algebraic approach to determining whether the order of growth is polynomial or
exponential, and in the polynomial case, the order of polynomial growth. Let $M =
(Q,\Sigma,\delta,q_0,F)$, where $|Q| = n$, and let $A = A(M) = (a_{ij})_{1\leq i,j\leq n}$ be the
\emph{adjacency matrix} of $M$, that is, $a_{ij}$ denotes the number of paths of length 1 from
$q_i$ to $q_j$. Then $(A^m)_{i,j}$ counts the number of paths of length $m$ from $q_i$ to $q_j$.
Since a final state is reachable from every state $q_j$, the order of growth of $L(M)$ is the order
of growth of $A^m$ as $m\rightarrow\infty$. This order of growth can be estimated using nonnegative
matrix theory.

\begin{theorem} [Perron-Frobenius]\label{thm:P_F}
Let $A$ be a nonnegative square matrix, and let $r$ be the spectral radius of $A$, i.e., $r =
\max\{|\lambda|:\lambda\textrm{ is an eigenvalue of }A\}$. Then
\begin{enumerate}
    \item $r$ is an eigenvalue of $A$;
    \item there exists a positive integer $h$ such that any eigenvalue $\lambda$ of $A$
    with $|\lambda| = r$ satisfies $\lambda^h = r^h$.
\end{enumerate}
\end{theorem}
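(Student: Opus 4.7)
This is a classical result in nonnegative matrix theory, so my plan is to follow the standard proof strategy: handle part 1 by a perturbation from the strictly positive case, and handle part 2 by reducing to irreducible blocks and analyzing the cyclic structure of the peripheral spectrum.

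For part 1, I would first establish Perron's theorem for a strictly positive matrix $B$ via the Collatz--Wielandt characterization. Define $f(x) = \min_{i \,:\, x_i > 0} (Bx)_i / x_i$ on the nonnegative unit simplex; upper semicontinuity of $f$ together with compactness yields a maximizer $v$, and checking that any strict inequality in the minimum can be improved shows that $v$ is an eigenvector of $B$ for the eigenvalue $s = f(v)$. A short argument using $|Bx| \leq B|x|$ componentwise then shows that $s$ equals the spectral radius $r(B)$. To pass to a general nonnegative $A$, set $A_\epsilon = A + \epsilon J$ where $J$ is the all-ones matrix. Each $A_\epsilon$ is strictly positive, so by the Perron case it admits a nonnegative unit eigenvector $v_\epsilon$ with eigenvalue $r(A_\epsilon)$. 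Continuity of the roots of the characteristic polynomial gives $r(A_\epsilon) \to r = r(A)$ as $\epsilon \to 0^+$, and compactness of the unit sphere yields a convergent subsequence $v_{\epsilon_k} \to v \geq 0$; passing to the limit in the eigenvector equation produces $Av = rv$.

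For part 2, I would first put $A$ into Frobenius normal form by a permutation similarity, reducing it to block upper triangular shape whose diagonal blocks $B_1, \dots, B_k$ are irreducible. Since the spectrum of $A$ is the union of the spectra of the $B_j$, it suffices to prove the claim for each irreducible block whose spectral radius equals $r$, then take $h$ to be the least common multiple of the resulting periods. For a single irreducible nonnegative block $B$ with spectral radius $\rho$, define $h$ as the gcd of the directed cycle lengths in the transition graph of $B$. If $\lambda$ is an eigenvalue of $B$ on the circle $|z| = \rho$ with eigenvector $u$, the triangle inequality gives $\rho |u| \leq B|u|$, and irreducibility (via part 1 applied to $B$) forces equality with $|u|$ strictly positive; the equality case then aligns the phases of $u$ along every edge of the transition graph. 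Propagating these alignments around a directed cycle of length $\ell$ produces the constraint $(\lambda/\rho)^\ell = 1$, so $\lambda/\rho$ is a common root of unity for all cycle lengths, hence an $h$-th root of unity.

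The main obstacle I anticipate is part 2, specifically the bookkeeping needed to translate the edgewise phase-alignment constraints into the clean algebraic conclusion $\lambda^h = \rho^h$. This step relies on irreducibility to propagate phase relations through the entire graph, together with the gcd definition of $h$ to capture exactly the obstruction to trivializing the rotation. By comparison, part 1 and the Frobenius block-triangular reduction are essentially routine.
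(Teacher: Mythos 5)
The paper does not prove this theorem: it is stated as a classical result and dispatched with a citation to Minc's \emph{Nonnegative Matrices} (Chapters 1 and 3), so there is no in-paper argument to compare against. Your outline is the standard textbook proof and is essentially correct: Collatz--Wielandt plus the perturbation $A+\epsilon J$ for part 1, and the Frobenius normal form plus Wielandt's phase-alignment argument on irreducible blocks for part 2; the reduction taking $h$ to be the lcm of the periods of the blocks whose spectral radius equals $r$ is exactly right, since blocks of smaller spectral radius contribute no peripheral eigenvalues. Two small points you should make explicit if you write this out in full. First, the equality case $\rho\lvert u\rvert = B\lvert u\rvert$ with $\lvert u\rvert>0$ needs more than part 1 applied to $B$: the usual route is to pair the subinvariance inequality $\rho\lvert u\rvert\le B\lvert u\rvert$ against a \emph{strictly positive left} Perron vector of the irreducible block (which exists because, for irreducible $B$, any nonnegative eigenvector $v$ with $Bv=\rho v$ satisfies $(I+B)^{n-1}v=(1+\rho)^{n-1}v>0$), and only then does the componentwise triangle-inequality equality give the phase relations along every edge. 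Second, handle the degenerate cases: $r=0$ (the claim is vacuous, any $h$ works) and $1\times 1$ zero diagonal blocks in the normal form, which have no cycles and so no well-defined period, but also contribute no peripheral eigenvalues when $r\ge 1$. With those caveats the proposal is sound.
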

For more details, see \cite[Chapters 1, 3]{Minc88}.

\begin{defn}\label{def:P-F}\rm
The number $r = r(A)$ described in the above theorem is called the \emph{Perron-Frobenius
eigenvalue of $A$}. The \emph{dominating Jordan block} of $A$ is the largest block in the Jordan
decomposition of $A$ associated with $r(A)$.
\end{defn}

\begin{lemma}\label{lem:r>=1}
Let $A$ be a nonnegative $n\times n$ matrix over the integers. Then either $r(A) = 0$ or $r(A) \geq
1$.
\end{lemma}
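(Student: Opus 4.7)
The plan is to use the fact that the eigenvalues of an integer matrix are algebraic integers, together with a Kronecker-style argument on the product of Galois conjugates. Since $A$ has integer entries, the characteristic polynomial $\chi_A(x) = \det(xI - A)$ is monic with integer coefficients, so every eigenvalue of $A$ is an algebraic integer. By the Perron--Frobenius theorem (Theorem~\ref{thm:P_F}, part~1), $r = r(A)$ is itself an eigenvalue, hence a nonnegative real algebraic integer. I want to show that if $r \neq 0$ then $r \geq 1$.

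Assume $r > 0$ and let $m(x) \in \Z[x]$ be the minimal polynomial of $r$ over $\Q$; since $r$ is an algebraic integer, $m$ is monic with integer coefficients, of some degree $k \leq n$. Write $m(x) = \prod_{i=1}^{k}(x - r_i)$ where $r = r_1, r_2, \ldots, r_k$ are the Galois conjugates of $r$. Because $m$ divides $\chi_A$ in $\Q[x]$ (and in fact in $\Z[x]$), each $r_i$ is also an eigenvalue of $A$, so by the definition of spectral radius we have $|r_i| \leq r$ for every $i$.

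Now consider the constant term of $m$, which equals $(-1)^k \prod_{i=1}^k r_i$ and is an integer. Since $r > 0$, the polynomial $m(x)$ is not divisible by $x$, so this constant term is nonzero; hence $\left|\prod_{i=1}^k r_i\right| \geq 1$. Combining this with the bound $|r_i| \leq r$ from the previous paragraph gives
\[
1 \leq \prod_{i=1}^k |r_i| \leq r^k,
\]
so $r \geq 1$, as required.

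The only step that requires any subtlety is the observation that the Galois conjugates of $r$ are again eigenvalues of $A$, but this is immediate from the fact that the minimal polynomial of an algebraic integer divides any monic integer polynomial that annihilates it, in particular $\chi_A$. Everything else is a direct computation with the constant term of the minimal polynomial, so no real obstacle should arise.
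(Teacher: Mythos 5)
Your proof is correct, and it takes a genuinely different route from the paper. The paper argues by contraposition with a limit argument: if $r < 1$ then every eigenvalue has modulus less than $1$, so $A^m \to 0$ as $m \to \infty$; since each $A^m$ is an integer matrix, this forces $A^m = 0$ for large $m$, i.e., $A$ is nilpotent and $r = 0$. You instead give a purely algebraic, Kronecker-style argument: $r$ is an eigenvalue by Perron--Frobenius, hence an algebraic integer whose Galois conjugates are all roots of $\chi_A$ and therefore bounded in modulus by $r$, while the product of the conjugates is a nonzero rational integer and so has absolute value at least $1$, giving $r^k \geq 1$. All the steps check out (in particular, the irreducibility of the minimal polynomial correctly rules out a zero constant term when $r > 0$). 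Your approach avoids limits and the Jordan form entirely, and in fact barely uses nonnegativity --- the same argument applied to any nonzero eigenvalue of maximal modulus would do, so it generalizes to arbitrary integer matrices with a nonzero eigenvalue. The paper's approach, by contrast, stays within elementary matrix analysis (no algebraic number theory) and yields the additional structural fact that $r = 0$ exactly when $A$ is nilpotent, which fits naturally with the growth-of-$A^m$ analysis in the surrounding section.
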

\begin{proof}
Let $r(A) = r,\lambda_1\cdots,\lambda_\ell$ be the distinct eigenvalues of $A$, and suppose that $r
< 1$. Then $\lim_{m\rightarrow\infty}r^m = \lim_{m\rightarrow\infty}\lambda_i^m = 0$ for all $i =
1,\ldots,\ell$, and so $\lim_{m\rightarrow\infty}A^m = 0$ (the zero matrix). But $A^m$ is an
integral matrix for all $m\in\N$, and the above limit can hold if and only if $A$ is nilpotent,
i.e., $r = \lambda_i = 0$ for all $i = 1,\ldots,\ell$.
\end{proof}

\begin{lemma}\label{lem:growthorder}
Let $A$ be a nonnegative $n\times n$ matrix over the integers. Let $r(A) =
r,\lambda_1,\ldots,\lambda_\ell$ be the distinct eigenvalues of $A$, and let $d$ be the size of the
dominating Jordan block of $A$. Then $A^m\in\Theta(r^mm^{d-1})$.
\end{lemma}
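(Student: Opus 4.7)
The plan is to analyze $A^m$ via the Jordan normal form. Write $A = PJP^{-1}$ with $J = \bigoplus_\alpha J_{k_\alpha}(\mu_\alpha)$ block diagonal, so that $A^m = PJ^m P^{-1}$. For a single Jordan block one has the classical formula
\[
(J_k(\mu)^m)_{ij} \;=\; \binom{m}{j-i}\,\mu^{m-(j-i)} \qquad (j \geq i),
\]
so every entry of $J_k(\mu)^m$ is $O(m^{k-1}|\mu|^m)$, with the leading $\binom{m}{k-1}\mu^{m-k+1}$ term appearing in the $(1,k)$ position. Consequently each entry of $A^m$ is a finite sum $\sum_\lambda p_\lambda(m)\,\lambda^m$, where $\lambda$ ranges over the distinct eigenvalues of $A$ and $\deg p_\lambda$ is at most one less than the size of the largest Jordan block associated with $\lambda$.

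For the upper bound, blocks with $|\mu|<r$ are exponentially dominated, so only peripheral eigenvalues ($|\mu|=r$) can contribute at the top order. Here I invoke the following consequence of the Perron--Frobenius theory for nonnegative matrices, beyond what is stated in Theorem~\ref{thm:P_F}: all peripheral eigenvalues share the same Jordan block pattern as $r$, so every block with $|\mu|=r$ has size at most $d$. Hence every entry of $J^m$, and therefore of $A^m=PJ^mP^{-1}$, is $O(m^{d-1}r^m)$.

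For the lower bound, take a generalized eigenvector chain $v_1,\ldots,v_d$ for the dominating Jordan block, so that $v_1$ is a genuine (nonzero) eigenvector for $r$ and $Av_i = rv_i + v_{i-1}$ for $i\geq 2$. A direct induction on $m$ gives
\[
A^m v_d \;=\; \sum_{i=0}^{d-1}\binom{m}{i}\,r^{m-i}\, v_{d-i},
\]
whose norm is $\Theta(m^{d-1}r^m)$ since $v_1\neq 0$, yielding $\norm{A^m}=\Omega(m^{d-1}r^m)$. The degenerate case $r<1$ is disposed of by Lemma~\ref{lem:r>=1}: then $r=0$, $A$ is nilpotent, and $A^m=0$ eventually, consistent with the claim.

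The main obstacle is the matching upper bound rather than the lower bound. Without nonnegativity, a peripheral eigenvalue $\lambda\neq r$ could in principle carry a Jordan block strictly larger than $d$ and so inflate the growth rate beyond $m^{d-1}r^m$. Nonnegativity of $A$, via the full Perron--Frobenius theory, is precisely what rules this out, and the peripheral-block-equality fact is the one structural input that must be carefully justified rather than read off directly from Theorem~\ref{thm:P_F} as quoted in the paper.
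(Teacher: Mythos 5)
Your argument is correct and rests on the same machinery as the paper's proof---the Jordan canonical form of $A$ and the binomial formula for powers of a Jordan block---but the two proofs diverge in how they neutralize the peripheral eigenvalues $\lambda\neq r$ with $|\lambda|=r$. The paper invokes part~(2) of Theorem~\ref{thm:P_F} to replace $A$ by $A^h$, so that $r^h$ is the only eigenvalue of maximal modulus, and then computes $\lim_{m\to\infty} A^{hm}/(r^{hm}m^{d-1})$ explicitly as a nonzero matrix; you keep $A$ itself and bound the peripheral blocks directly, and your lower bound via the Jordan chain $v_1,\ldots,v_d$ is clean and if anything more elementary than the paper's limit computation. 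Two remarks on the step you flag. First, the phrasing ``all peripheral eigenvalues share the same Jordan block pattern as $r$'' is stronger than what holds for reducible nonnegative matrices (it is true for irreducible ones by Wielandt's lemma); what is true in general, and is all you actually use, is that the index (largest Jordan block size) of any peripheral eigenvalue is at most the index $d$ of $r$. Second, you are right that this input is not contained in Theorem~\ref{thm:P_F} as quoted---but the paper's proof needs exactly the same input and buries it in ``without loss of generality'': passing to $A^h$ merges every peripheral eigenvalue of $A$ into the single eigenvalue $r^h$ of $A^h$, so the dominating block of $A^h$ has size equal to the maximum peripheral index of $A$, and one must know that this maximum is $d$ in order to recover the stated exponent for $A$ (one also has to interpolate between the residues $m\bmod h$, which the paper likewise elides). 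A short self-contained justification of the missing fact: each entry $f_{ij}(z)$ of $(I-zA)^{-1}=\sum_{m\geq 0}A^mz^m$ is a rational function with nonnegative Taylor coefficients, so $|f_{ij}(z)|\leq f_{ij}(|z|)$ for $|z|<1/r$; letting $z$ approach a pole $1/\lambda$ of order $k$ radially shows that $f_{ij}(t)$ blows up at least like $(1-rt)^{-k}$ as $t\uparrow 1/r$ on the positive axis, hence $1/r$ is a pole of order at least $k$, i.e., $k\leq d$. With that supplied (or cited from the theory of the peripheral spectrum of nonnegative matrices), your proof is complete.
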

\begin{proof}
The theorem trivially holds for $r = 0$. Assume $r\geq 1$. Without loss of generality, we can
assume that $A$ does not have an eigenvalue $\lambda$ such that $\lambda \neq r$ and $|\lambda|=r$;
if such an eigenvalue exists, replace $A$ by $A^h$ (see Theorem~\ref{thm:P_F}). Let $J$ be the
Jordan canonical form of $A$, i.e., $A = SJS^{-1}$, where $S$ is a nonsingular matrix, and $J$ is a
diagonal block matrix of Jordan blocks. We use the following notation: $J_{\lambda, e}$ is a Jordan
block of order $e$ corresponding to eigenvalue $\lambda$, and $O_x$ is a square matrix, where all
entries are zero, except for $x$ at the top-right corner. Let $J_{r,d}$ be the dominating Jordan
block of $A$. It can be verified by induction that
$$
J_{r,d}^m = \left(
  \begin{array}{cccccc}
    r^m & {m\choose 1}r^{m-1} & {m\choose 2}r^{m-2} & \cdots & {m\choose d-2}r^{m-d+2} & {m\choose d-1}r^{m-d+1} \\
    0 & r^m & {m\choose 1}r^{m-1} & \cdots& {m\choose d-3}r^{m-d+3} & {m\choose d-2}r^{m-d+2} \\
    \vdots & \vdots & \vdots  & & \vdots & \vdots \\
    0 & 0 & 0 & \cdots & r^m & {m\choose 1}r^{m-1} \\
    0 & 0 & 0 & \cdots & 0& r^m \\
  \end{array}
\right).
$$
Thus the first row of $J_{r,d}^m$ has the form
$$
r^m\left[1 \;\;\; \frac{m}{r} \;\;\; \frac{m(m-1)}{2!r^2} \;\;\;\cdots\;\;\;
\frac{m(m-1)\cdots(m-(d-2))}{(d-1)!r^{d-1}}\right],
$$
and so
$$
\lim_{m\rightarrow\infty}\frac{J_{r,d}^m}{r^mm^{d-1}} = O_\alpha, \;\; \textrm{ where } \alpha =
\frac{1}{(d-1)!r^{d-1}}\;.
$$
All Jordan blocks other than the dominating block converge to zero blocks. and
$$
\lim_{m\rightarrow\infty}\frac{A^m}{r^mm^{d-1}} =
S\lim_{m\rightarrow\infty}\frac{J^m}{r^mm^{d-1}}S^{-1}.
$$
The result follows.
\end{proof}

\textbf{Note:} The growth order of $A^m$ supplies an algebraic proof of the fact that regular
languages can grow either polynomially or exponentially, but no intermediate growth order is
possible.  This result can also be derived from a more general matrix theoretic result of
Bell \cite{Bel05}.

Lemma~\ref{lem:growthorder} implies that to determine the order of growth of $L(M)$, we need to
compute the Perron-Frobenius eigenvalue $r$ of $A(M)$: if $r = 0$, then $L(M)$ is finite; if $r =
1$, the order of growth is polynomial; if $r> 1$, the order of growth is exponential. In the
polynomial case, if we want to determine the order of polynomial growth, we need to also compute
the size of the dominating Jordan block, which is the algebraic multiplicity of $r$ in the minimal
polynomial of $A(M)$.

Both computations can be done in polynomial time, though the runtime is more than cubic. The
characteristic polynomial, $c_A(x)$, can be computed in $\tilde O(n^4 \log \norm{A})$ bit
operations (here $\tilde O$ stands for soft-$O$, and $\norm{A}$ stands for the $L_\infty$ norm of
$A$). If $c_A(x) = x^n$ then $r = 0$; else, if $c_A(1)\neq 0$, then $r > 1$. In the case of $c_A(1)
= 0$, we need to check whether $c_A(x)$ has a real root in the open interval $(1,\infty)$. This can
be done using a real root isolation algorithm; it seems the best deterministic one uses $\tilde
O(n^6\log^2\norm{A})$ bit operations \cite{ESY06}. The minimal polynomial, $m_A(x)$, can be
computed through the rational canonical form of $A$ in $\tilde O(n^5 \log \norm{A})$ bit operations
(see references in \cite{GS02}). All algorithms mentioned above are deterministic; both $c_A(x)$
and $m_A(x)$ can be computed in $\tilde O(n^{2.697263} \log \norm{A})$ bit operations using a
randomized Monte Carlo algorithm \cite{KV04}.

An interesting problem is the following: given a nonnegative integer matrix $A$, is it possible to
decide whether $r(A) > 1$ in time better than $\tilde O(n^6\log^2\norm{A})$? Using our
combinatorial algorithm, we can do it in time $O(n^4\norm{A})$, by interpreting $A$ as the
adjacency matrix of a DFA over an alphabet of size $\norm{A}$, and applying the algorithm to each
of the connected components of $A$ separately. It would be interesting to find an algorithm
polynomial in $\log\norm{A}$.

%
%
%
%

\section*{Acknowledgments}
We would like to thank Arne Storjohann for his input regarding algorithms for computing the
Perron-Frobenius eigenvalue of a nonnegative integer matrix.

\end{document}